\definecolor{webgreen}{rgb}{0,.5,0}
\definecolor{webbrown}{rgb}{.6,0,0}
\def\Que{\mathbb{Q}}
\newcommand{\seqnum}[1]{\href{https://oeis.org/#1}{\rm \underline{#1}}}
\begin{document}

\theoremstyle{plain}
\newtheorem{theorem}{Theorem}
\newtheorem{corollary}[theorem]{Corollary}
\newtheorem{lemma}[theorem]{Lemma}
\newtheorem{proposition}[theorem]{Proposition}

\newtheorem{definition}[theorem]{Definition}
\theoremstyle{definition}
\newtheorem{example}[theorem]{Example}
\newtheorem{conjecture}[theorem]{Conjecture}

\theoremstyle{remark}
\newtheorem{remark}[theorem]{Remark}

\makeatletter
\g@addto@macro\bfseries{\boldmath}
\makeatother

\title{Using finite automata to compute the base-$b$ representation of the golden ratio and other quadratic irrationals}

\author{
Aaron Barnoff and Curtis Bright \\
School of Computer Science \\
University of Windsor \\
Windsor, ON N9B 3P4 \\
Canada \\
\href{mailto:barnoffa@uwindsor.ca}{\tt barnoffa@uwindsor.ca} \\
\href{mailto:cbright@uwindsor.ca}{\tt cbright@uwindsor.ca}\\
\ \\
Jeffrey Shallit\\
School of Computer Science\\
University of Waterloo \\
Waterloo, ON  N2L 3G1 \\
Canada\\
\href{mailto:shallit@uwaterloo.ca}{\tt shallit@uwaterloo.ca} }

\maketitle

\begin{abstract}
We show that the $n$'th digit of the base-$b$ representation of
the golden ratio is a finite-state function of the
Zeckendorf representation of $b^n$, and hence can be computed
by a finite automaton.   Similar results can be proven for any quadratic irrational.   We use a satisfiability (SAT)
solver to prove, in some cases, that the
automata we construct are minimal.
\end{abstract}

\section{Introduction}

The base-$b$ digits of famous irrational numbers have been of
interest for hundreds of years.   For example, William Shanks
computed 707 decimal digits of $\pi$ in 1873 (but only the first
528 were correct) \cite{Shanks:1873}.   As a high school student, the third author
used a computer in 1976 to determine the first $10{,}000$ digits of the decimal
representation of $\varphi = (\sqrt{5}+1)/2$, the golden ratio, using the computer language
APL \cite{Shallit:1976}.

The celebrated results of 
Bailey, Borwein, and Plouffe \cite{Bailey&Borwein&Plouffe:1997}
demonstrated that one can compute the $n$'th bit of certain
famous constants, such as $\pi$, in very small space.\footnote{Sometimes
this result is described as ``computing the $n$'th digit without
having to compute the previous $n-1$ digits''. But this is not
really a meaningful assertion, since the phrase
``computing $x$ without computing $y$'' is not so well-defined.}

{\it Can finite automata generate the base-$b$ digits of irrational algebraic numbers, such as $\varphi$?}   This fundamental question was raised by Cobham in the late 1960's (a re-interpretation of a related question due to Hartmanis and Stearns \cite{Hartmanis&Stearns:1965}). Though Cobham believed for a time that he had proved they cannot be so generated \cite{Cobham:1968b},    his proof was flawed, and it was not until 2007 that Adamczewski and Bugeaud \cite{Adamczewski&Bugeaud:2007} succeeded in proving that there is no deterministic finite automaton with output that, on input $n$ expressed in base $b$, returns the $n$'th base-$b$ digit of an irrational real algebraic number $\alpha$.

Even so, in this paper we show that, using finite
automata, one {\it can\/} compute the $n$'th digit in the
base-$b$ representation of the golden ratio $\varphi$! At first glance this might seem to contradict the Adamczewski--Bugeaud result.
But it does not, since for our theorem the input is not $n$ 
expressed in base $b$, but
rather $b^n$ in an entirely different numeration system, the
Zeckendorf representation.   Analogous results exist for any quadratic
irrational.

Our result does not give a particularly efficient way to
compute the base-$b$ digits of quadratic irrationals, but it is nevertheless somewhat surprising.  
Using a SAT solver, in some cases (such for the binary digits
of $\varphi$)
we prove that the automata constructed is minimal and unique.
Interestingly, in other cases (such as for the ternary digits of $\varphi$)
we were able to prove the minimality of our automaton, but we
discovered several distinct automaton with the same number of states
computing the same quadratic irrational (at least up to a very high precision).
It is conceivable that the automata produced by our method are indeed
minimal and unique in general, and we leave this as an open question.

\section{Number representations and automata}
A DFAO (deterministic finite automaton with output)
$A$ consists
of a finite number of states, and labeled transitions connecting
them.  The automaton processes an input string $x$ by starting in the
distinguished start state $q_0$, and then following the transitions from state to
state, according to each successive bit of $x$.   Each state $q$ has an output $\tau(q)$ associated with it, and the function $f_A$ computed by the DFAO maps the input
$x$ to the output associated with the last state reached.  For an example of a DFAO, see Figure~\ref{fig2}.

A DFA (deterministic finite automaton) is quite similar to a DFAO.
The only difference is that there are exactly two possible outputs
associated with each state, either $0$ or $1$.   States with an
output of $1$ are called ``accepting'' or ``final''.  If an input
results in an output of $1$, it is said to be accepted by the DFA.
A {\it synchronized\/} DFA \cite{Carpi&Maggi:2001} is a particular type of DFA that takes two
inputs in parallel; this is accomplished by making the input alphabet
a set of pairs of alphabet symbols.  A synchronized automaton computes
a synchronized sequence $(f(n))_{n \geq 0}$; it does this by accepting exactly the inputs
where the first components spell out a representation of $n$, and
the second components spell out a representation for $f(n)$, where leading zeros may be required to make the inputs the same length.  For more
about synchronized sequences, see \cite{Shallit:2021h}.  An example of a synchronized DFA appears in Figure~\ref{fig1}.

Let $x$ be a non-negative real number, and write its base-$b$
representation in the form $x = \sum_{-\infty < i \leq t} a_i b^i = a_t a_{t-1} \cdots a_0 . a_{-1} a_{-2} \cdots $,
where $a_i \in \{0,1,\ldots, b-1 \}$.  For $n \geq 0$, we call $a_{-n-1}$ the
$n$'th digit to the right of the point.   The indexing is perhaps a little
unusual, but it seems to decrease the size of the automata produced.   

\subsection{Zeckendorf representation}
The Fibonacci numbers are defined, as usual, by $F_0 = 0$,
$F_1 = 1$, and $F_n = F_{n-1} + F_{n-2}$.    The Zeckendorf
representation \cite{Lekkerkerker:1952,Zeckendorf:1972}
of a natural number $n$ is the unique way
of writing $n$ as a sum of Fibonacci numbers $F_i$, $i \geq 2$,
subject to the condition that no two consecutive Fibonacci
numbers are used.   We may write the Zeckendorf representation
as a binary string $(n)_F = a_1 \cdots a_t$, where
$n = \sum_{1 \leq i \leq t} a_i F_{t+2-i}$.   For example,
$(43)_F = 34 + 8 + 1 = F_9 + F_6 + F_2 $ has representation
$10010001$.  The substring $11$ cannot occur
due to the rule that two consecutive Fibonacci
numbers cannot be used.
In what follows, leading zeros in strings are typically
ignored without comment.

To illustrate these ideas,
Table~\ref{table3} gives the Zeckendorf representation of the
first few powers of $2$ and $3$.   We will use them in
Section~\ref{phi23}.
\setlength{\tabcolsep}{5pt}
\begin{table}[htb]
    \centering
    \begin{tabular}{c|r|r|r|r}
    $i$ & $2^i$ & $(2^i)_F$ & $3^i$ & $(3^i)_F$ \\ 
    \hline
0 & 1 &  1    &      1 & 1          \\  
1 & 2 & 10    &    3  &  100         \\ 
2 & 4 &  101  &     9  & 10001        \\
3 & 8 & 10000  &    27 & 1001001     \\ 
4 & 16 &  100100 &   81 & 101001000  \\  
5 & 32 & 1010100  &  243 & 100000010010 \\
    \end{tabular}
    \caption{Zeckendorf representation of the first few powers of $2$ and $3$.}
    \label{table3}
\end{table}

\section{Automata and the base-$b$ representation of $\varphi$}
\label{phi23}

Our main result is Theorem~\ref{main} below.
\begin{theorem}
For all integers $b \geq 2$, there exists a DFAO $A_b$ that,
on input the Zeckendorf representation of $b^n$, computes
the $n\mspace{-2mu}$'th digit to the right of the point
in the base-$b$ representation of $\varphi$.
\label{main}
\end{theorem}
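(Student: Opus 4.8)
The plan is to reduce the statement to a single arithmetic identity and then to observe that every quantity occurring in it is computable by a finite automaton that reads the Zeckendorf representation of $b^n$. The reduction is routine: because $\varphi$ is irrational, $b^{n+1}\varphi$ is never an integer and (expanding $\varphi=a_0.a_{-1}a_{-2}\cdots$ in base $b$ and noting there is no carrying) its fractional part is strictly less than $1$, so the $n\mspace{-2mu}$'th digit to the right of the point equals $\lfloor b^{n+1}\varphi\rfloor \bmod b$. Thus it suffices to build, for each $b\ge 2$, a DFAO that reads $(b^n)_F$ and outputs $\lfloor b^{n+1}\varphi\rfloor \bmod b$.

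The engine of the argument is a ``shift-plus-conjugate'' identity for multiplication by $\varphi$ on Zeckendorf representations. Put $\psi=(1-\sqrt5)/2=-1/\varphi$. Binet's formula gives $\varphi F_k=F_{k+1}-\psi^k$, and summing this over the digits of $m$ shows that if $(m)_F=c_1\cdots c_s$ then $\varphi m = M-E$, where $M$ is the integer whose Zeckendorf representation is $c_1\cdots c_s 0$ (appending a $0$ keeps the string admissible, so this really is $(M)_F$) and $E=\sum_{j}c_j\psi^{s+2-j}$. The crucial point is that $E$ carries only a bounded amount of information: since consecutive $1$'s in a Zeckendorf string are at least two positions apart and $\psi^2/(1-\psi^2)=1/\varphi<1$, the term of $E$ with the smallest exponent dominates the remaining ones, so $0<|E|<1$, $E\ne 0$, and the sign of $E$ is $(-1)^{z}$, where $z$ is the number of trailing zeros of the string $(m)_F$. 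Hence $\lfloor\varphi m\rfloor$ equals $M$ when $z$ is odd and $M-1$ when $z$ is even.

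I would then assemble $A_b$ as a composition of finite-state maps. First, from $(b^n)_F$ produce $(b^{n+1})_F$: multiplication by a fixed constant is a synchronized (hence finite-state) operation in any Pisot numeration system, and the Fibonacci system is one. Next, apply the lemma with $m=b^{n+1}$. A DFAO scanning $(b^{n+1})_F$ can maintain the running value of ``$(b^{n+1})_F$ with a $0$ appended'' reduced modulo $b$ --- this is finite-state because the residues $F_k\bmod b$ are eventually periodic --- and can simultaneously track the parity of the current run of low-order zeros; at the end of the input it therefore knows both $M\bmod b$ and $z\bmod 2$, and it outputs $M\bmod b$ or $(M-1)\bmod b$ accordingly, which by the lemma and the reduction is exactly the desired digit. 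Composing these maps (nondeterministically guessing each intermediate representation, running the next automaton on it, and then determinizing by the subset construction) produces the required DFAO, and leading zeros are harmless at every stage.

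The real difficulty is concentrated in two places. The mathematically substantive point is the sign assertion in the lemma: this is exactly where the special arithmetic of $\varphi$ enters (through the inequality $\psi^2/(1-\psi^2)<1$ for its conjugate), and it is what makes the $n\mspace{-2mu}$'th base-$b$ digit recoverable from a bounded amount of ``tail'' data rather than from the whole expansion. The more bookkeeping-heavy point is justifying that a synchronized operation followed by automatic post-processing is again automatic; this is standard in principle but is where all the genuine automaton-theoretic content sits, and it is the step I would be most careful to write out cleanly. A more self-contained variant, which avoids appealing to properties of Pisot numeration systems, keeps $m=b^n$ throughout, uses $\lfloor b^{n+1}\varphi\rfloor=bM-\lceil bE\rceil$, and argues directly that $\lceil bE\rceil$ is a finite-state function of $(b^n)_F$; the price there is that one must invoke the quantitative fact that $\varphi$ is badly approximable, so that $\|b^{n+1}\varphi\|$ is never much smaller than $b^{-n-1}$, which is what bounds the needed look-ahead. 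Since the introduction promises analogues for arbitrary quadratic irrationals, I would expect the paper to prefer the first, more structural route, with $\varphi$ and the Fibonacci system replaced by a quadratic irrational and the associated linear numeration system.
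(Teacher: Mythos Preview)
Your proposal is correct and follows the same overall strategy as the paper: write the $n$'th digit as $\lfloor b^{n+1}\varphi\rfloor - b\lfloor b^n\varphi\rfloor$ (equivalently, $\lfloor b^{n+1}\varphi\rfloor\bmod b$), observe that $m\mapsto\lfloor m\varphi\rfloor$ is Fibonacci-synchronized, and compose finite-state operations to obtain a DFAO in $q=b^n$. The difference is one of packaging. The paper simply \emph{cites} the synchronized automaton for $\lfloor n\varphi\rfloor$ (via the identity $\lfloor n\varphi\rfloor=[(n-1)_F0]_F+1$) and then invokes the first-order-logic-to-automata compilation of Mousavi--Schaeffer--Shallit as a black box to produce a DFAO that on input $(q)_F$ outputs $\lfloor bq\varphi\rfloor-b\lfloor q\varphi\rfloor$; specializing to $q=b^n$ finishes. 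You instead \emph{rederive} an equivalent formula from Binet ($\lfloor m\varphi\rfloor=[(m)_F0]_F$ minus $0$ or $1$ according to the parity of the trailing zeros of $(m)_F$) and then assemble the automaton by hand from a synchronized ``multiply by $b$'' relation, a mod-$b$ accumulator using the Pisano period, and a trailing-zero-parity tracker.

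What each buys: the paper's route is shorter and hides all the padding and length-alignment issues inside the logic-compilation machinery; your route is more self-contained and makes explicit exactly where the arithmetic of $\varphi$ enters (the inequality $\psi^2/(1-\psi^2)=1/\varphi<1$). One small caution on your main route: when you guess $(b^{n+1})_F$ in parallel with the input $(b^n)_F$, the former is a bounded constant longer, so you need the usual convention that leading zeros are permitted (or an initial-state shift by a fixed number of imagined leading zeros); you note this but should be explicit that the bound $|(bq)_F|-|(q)_F|\le c(b)$ is uniform in $q$. Your alternative route via $\lceil bE\rceil$ and bad approximability is more delicate than you suggest---the distance $\|b^{n+1}\varphi\|$ can be as small as $\Theta(b^{-n-1})$, so a priori all of $(b^n)_F$ may be needed to pin down $\lceil bE\rceil$; this does not break the argument (that is exactly the length of the input), but the ``bounded look-ahead'' phrasing is misleading, and the paper wisely avoids this variant.
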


\begin{proof}
It is known that there exists a $7$-state
synchronized DFA accepting, in parallel, the
Zeckendorf representations of $n$ and $\lfloor n \varphi \rfloor$
for all $n \geq 0$~\cite[Thm.~10.11.1(a)]{Shallit:2023}.
Its transition diagram is depicted in Figure~\ref{fig1}, where accepting 
states are denoted by double circles, and $0$ is the initial state, labeled by a headless arrow entering.   

The DFA is constructed using the fact that $\lfloor n \varphi \rfloor = [(n-1)_F 0]_F + 1$, where $[(n-1)_F 0]_F$ is the left shift of the string $(n-1)_F$.  
For example, $\lfloor 11\varphi \rfloor = 17$, and to determine $11 \rightarrow 17$, we find $(10)_F = 10010$, left-shift that to get $100100=(16)_F$, and add $1$ to get $17$. 
\begin{figure}[htb]
\begin{center}
\includegraphics[width=6in]{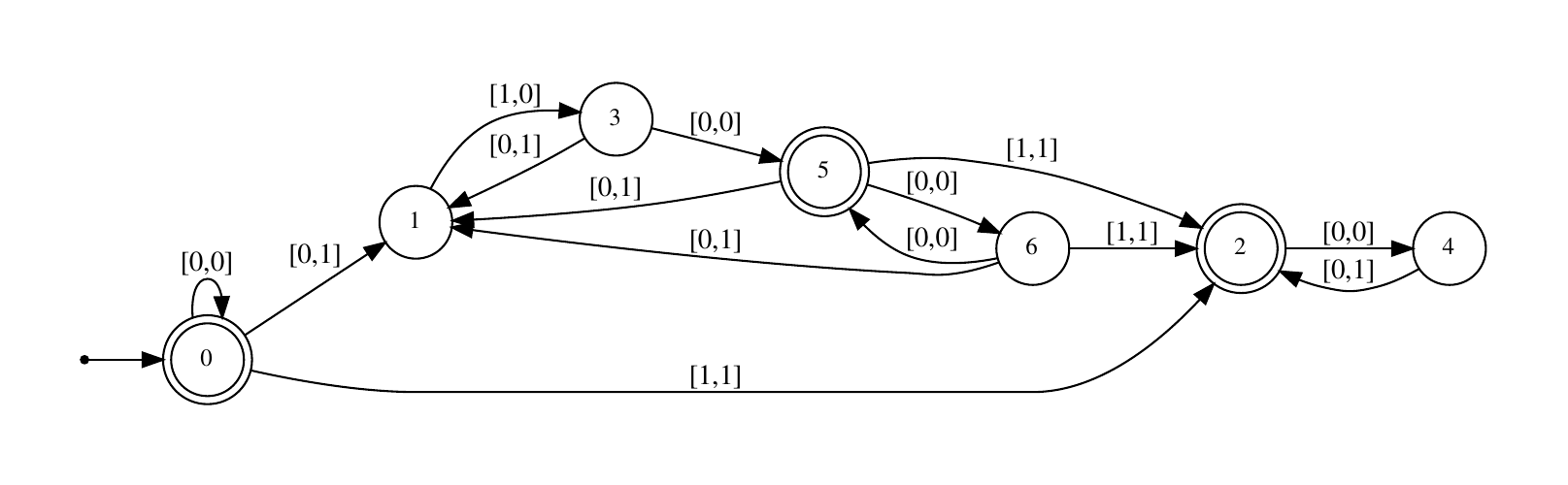}
\end{center}
\caption{Synchronized automaton for $\lfloor n \varphi \rfloor$. The inputs are the Zeckendorf representation of $n$ and $x$, in parallel.}
\label{fig1}
\end{figure}
To understand how to use this automaton, observe
that $(11)_F = 10100$ and $\lfloor 11 \varphi \rfloor = 17$ and
$(17)_F = 100101$.   Since these two numbers have representations
of different lengths, we need to pad the former with a leading $0$.
Then if $x = [0,1][1,0][0,0][1,1][0,0][0,1]$, the first components
concatenated spell out $010100$ and the second components
spell out $100101$.   When we input this, we visit, successively,
states $1,3,5,2,4,2$, and so we accept.

Let $x$ be a positive real number, with base-$b$ representation
$y.a_0 a_1 a_2 \cdots$, where the period is the analogue of the
decimal point for base $b$, and $y$ is an arbitrary finite block of
digits.
Now $b^{n+1} x$ has base-$b$ representation
$y a_0 a_1 \cdots a_{n-1} a_n. a_{n+1} \cdots$ and
$\lfloor b^{n+1} x \rfloor$ has base-$b$ representation
$y a_0 a_1 \cdots a_{n-2} a_{n-1} a_n$.
Similarly, $b \lfloor b^n x \rfloor$ has base-$b$
representation
$y a_0 a_1 \cdots a_{n-1} 0$.
Hence $\lfloor b^{n+1} x \rfloor - b \lfloor b^n x \rfloor = a_n$.
In the particular case where $x = \varphi$, we get a formula for the $n$'th digit to the right of the decimal point of $\varphi$, namely
\begin{equation*}
D_b (n) \coloneqq \lfloor b^{n+1} \varphi \rfloor - b \lfloor b^n \varphi \rfloor .
\label{dbi}
\end{equation*}

From the DFA computing $\lfloor n \varphi \rfloor$,
it is possible to create another DFA
accepting, in parallel, the Zeckendorf representations of
$q$ and $\lfloor bq \varphi \rfloor - b \lfloor q \varphi \rfloor$.
This is based on the fact that there is an algorithm to compile
a first-order logic statement involving the usual logical
operations (AND, OR, NOT, etc.), the integer
operations of addition, subtraction, multiplication by constants, and the universal and existential quantifiers, into an automaton that accepts the Zeckendorf representation of those integers making the statement true \cite{Mousavi&Schaeffer&Shallit:2016}.

From this DFA, we can compute $b$ individual DFAs $A_{b,i}$ accepting the Zeckendorf representation of those $n$ for which $\lfloor bq \varphi \rfloor - b \lfloor q \varphi \rfloor = i$, for $0 \leq i < b$.
Finally, we combine all the $A_{b,i}$ together into a single DFAO
(using a product construction for automata) computing the difference
$\lfloor bq \varphi \rfloor - b \lfloor q \varphi \rfloor$.

By substituting $q = b^n$, we see that this automaton is the
desired one, computing $D_b (n)$ on input the Zeckendorf representation of $b^n$. 
\end{proof}

We now use {\tt Walnut}, which is free software for compiling first-order logical
expressions into automata, to explicitly compute the automata for $\varphi$ in base $2$ and base $3$.
For base $2$, we need the following
{\tt Walnut} commands:
\begin{verbatim}
reg shift {0,1} {0,1} "([0,0]|[0,1][1,1]*[1,0])*":
def phin "?msd_fib (s=0 & n=0) | Ex $shift(n-1,x) & s=x+1":
def phid2 "?msd_fib Ex,y $phin(2*n,x) & $phin(n,y) & x=2*y+1":
combine FD2 phid2:
\end{verbatim}
These produce the DFAO in Figure~\ref{fig2}.
\begin{figure}[htb]
\begin{center}
\includegraphics[width=6in]{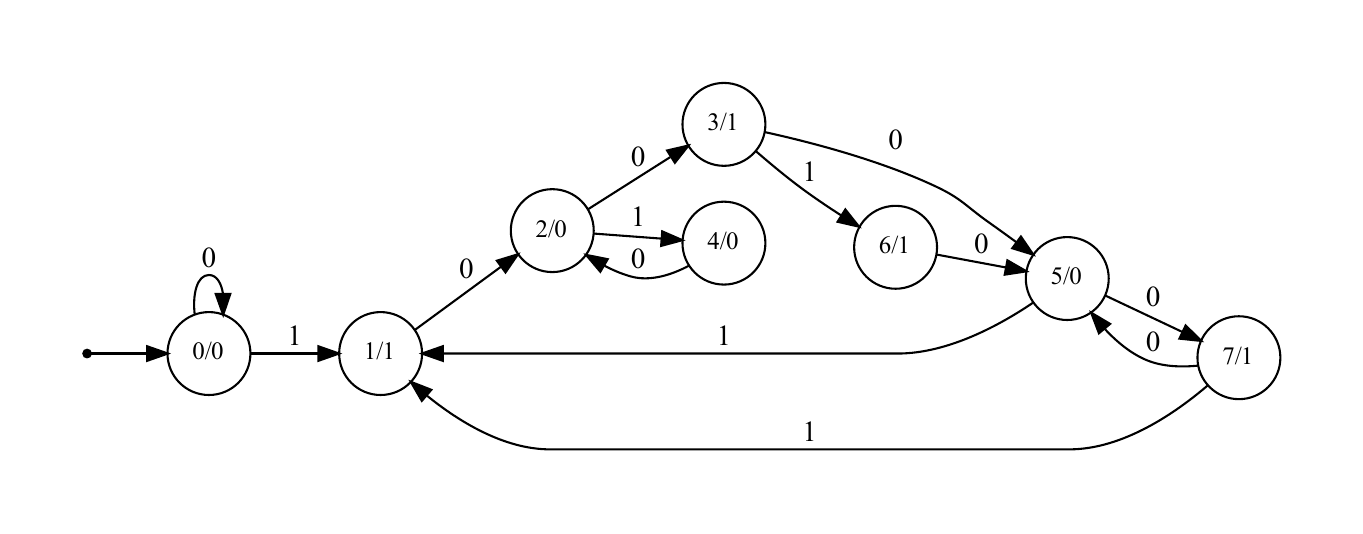}
\end{center}
\caption{Automaton for the $n$'th bit to the right of the binary point
of $\varphi$.
States are labeled in the form $a/b$, where $a$ is the state number and $b$ is the output.
The input is the Zeckendorf representation of $2^n$, and the output is $b$ when the last state reached is labeled $a/b$.}
\label{fig2}
\end{figure}

For example, in base $2$, we have
$\varphi = 1. 1 0 0 1 1 1 1 0 0 0 1 1 0 1 1 1 \cdots$.
To compute the 4th digit to the right of the binary point
we write $2^4 = 16$ in Zeckendorf representation,
namely $100100$, and feed it into the automaton,
reaching states $1,2,3,6,5,7$ successively,
with output $1$ at the end.  

We now explain the {\tt Walnut} commands.   The first line creates the DFA {\tt shift}, using a regular expression; it takes two base-$2$ inputs and only accepts if the second input is the left shift of the first.  Next is the DFA {\tt phin}, which is shown in Figure~\ref{fig1} and uses {\tt shift} to check that its two inputs have the relationship $(n)_F$ and $[(n-1)_F 0]_F + 1$, which computes the function $n \rightarrow \lfloor n \varphi \rfloor$ in a synchronized fashion. Next, the DFA {\tt phid2}, when given the representation of $q$ as input, 
accepts if $\lfloor 2q \varphi \rfloor -  2\lfloor q \varphi \rfloor = 1$, and rejects otherwise.
Lastly, {\tt combine}  converts {\tt phid2} into a DFAO by replacing its accepting and rejecting states with output value $1$ and $0$, respectively. 

The automaton for base $3$ (see Figure~\ref{fig3}) can be constructed similarly with the following {\tt Walnut} commands:
\begin{verbatim}
reg shift {0,1} {0,1} "([0,0]|[0,1][1,1]*[1,0])*":
def phin "?msd_fib (s=0 & n=0) | Ex $shift(n-1,x) & s=x+1":
def phid3a "?msd_fib Ex,y $phin(3*n,x) & $phin(n,y) & x=3*y+1":
def phid3b "?msd_fib Ex,y $phin(3*n,x) & $phin(n,y) & x=3*y+2":
combine FD3 phid3a=1 phid3b=2:                   
\end{verbatim}
\begin{figure}[htb]
\begin{center}
\includegraphics[width=6in]{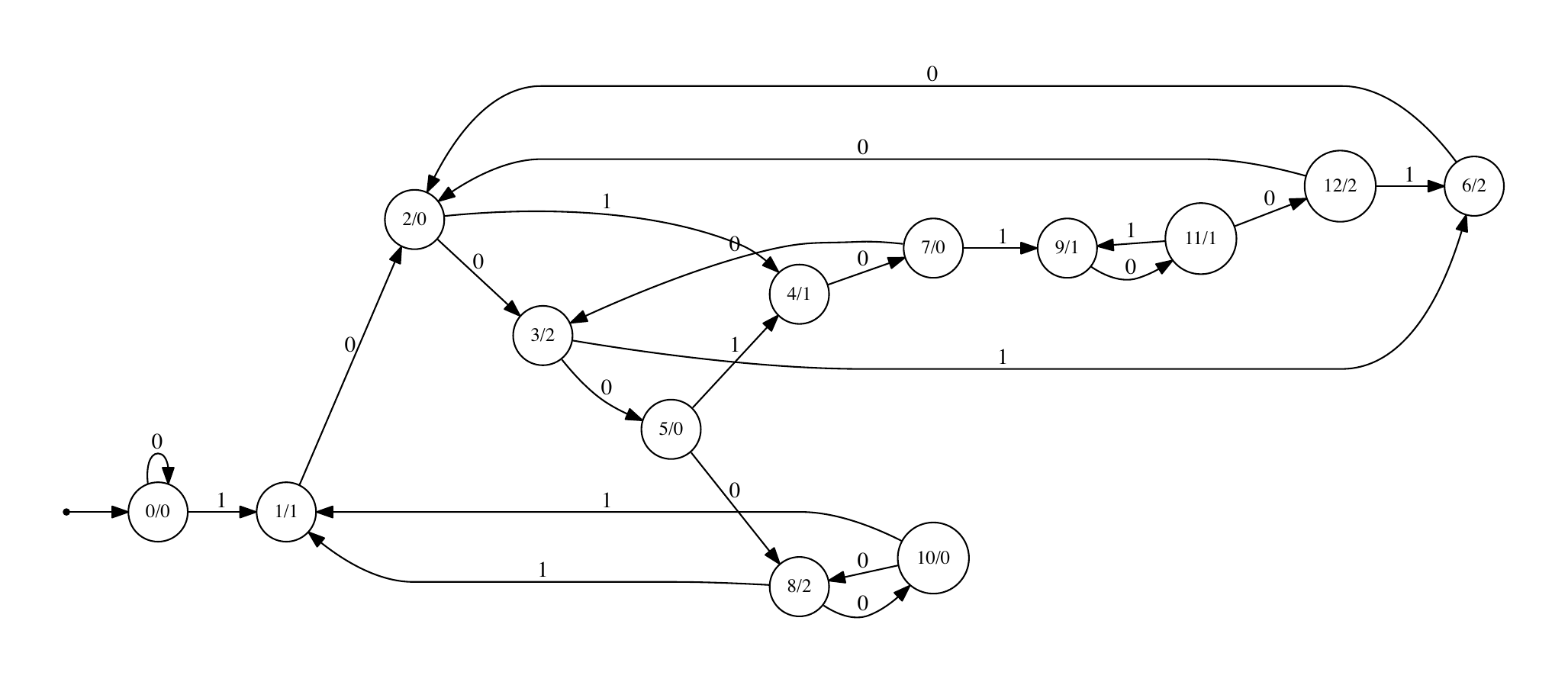}
\end{center}
\caption{Automaton for the $n$'th bit to the right of the point of $\varphi$ in base $3$.}
\label{fig3}
\end{figure}

In base $3$, 
$\varphi = 1. 1 2 1 2 0 0 1 1 2 2 0 2 1 2 1 0 \cdots$. To compute the $3$rd digit to the right of the point we write $3^3 = 27$ in Zeckendorf representation as $1001001$ and pass it to the automaton, which traverses states $1,2,3,6,2,3,6$ successively, giving an output of $2$. 


There is no conceptual barrier to carrying out similar computations for any
base $b\geq 2$.  For base $10$, for example, {\tt Walnut} computes a
finite automaton with $97$ states that, on input $(10^n)_F$, returns
the $n$'th digit to the right of the decimal point in the
decimal expansion of $\varphi$.  The transition diagram is, unfortunately, too
complicated to display here, but the transition table and output function is given in the Appendix.

\section{Other quadratic irrationals}
\label{sec:otherQI}

There is nothing special about $\varphi$, and the same ideas
can be used for any quadratic irrational, although
the input representation requires some modification.

\subsection{Pell representation}
\label{sec:Pell}

Another representation for the natural numbers is based on the Pell numbers, defined by $P_0 = 0$, $P_1 = 1$, and $P_n = 2P_{n-1} + P_{n-2}$ for $n \geq 2$.  We can then write every natural number
$n = \sum_{1 \leq i \leq t} a_i P_{t+1-i}$ where
$a_i \in \lbrace 0,1,2 \rbrace$.   To get uniqueness of the representation, we have to impose two conditions.  First, we must have that
$a_t \not= 2$.  Second, if $a_i = 2$, then $a_{i+1} = 0$.  
See \cite{Baranwal&Shallit:2019} for more details.   The unique representation, over the alphabet $\{0,1,2\}$, is denoted $(n)_P$.

Table~\ref{table4} gives the Pell representation of the
first few powers of $2$ and $3$.   We will use them in
Section~\ref{sec:Pell}.
\begin{table}[htb]
    \centering
    \begin{tabular}{c|r|r|r|r}
    $i$ & $2^i$ & $(2^i)_P$ & $3^i$ & $(3^i)_P$ \\ 
    \hline
0 & 1 &  1    &      1 & 1          \\  
1 & 2 & 10    &    3  &  11         \\ 
2 & 4 &  20  &     9  & 120        \\
3 & 8 & 111  &    27 & 2011     \\ 
4 & 16 &  1020 &   81 & 100201  \\  
5 & 32 & 10011  &  243 & 1100020 \\
    \end{tabular}
    \caption{Pell representation of the first few powers of $2$ and $3$.}
    \label{table4}
\end{table}

The Pell numeration system in {\tt Walnut} can be used to construct
automata computing the base-$b$ digits of $\sqrt{2}$, just as we did for $\varphi$.
This results in a 6-state DFAO for base 2 (see Figure~\ref{fig:sqrt2}), and a 14-state DFAO for base 3.
The {\tt Walnut} commands for base 2 are:
\begin{verbatim}
reg pshift {0,1,2} {0,1,2} 
   "([0,0]|([0,1][1,1]*([1,0]|[1,2][2,0]))|[0,2][2,0])*":
def sqrt2n "?msd_pell (s=0 & n=0) | Ex $pshift(n-1,x) & s=x+2":
def sqrt2d2 "?msd_pell Ex,y $sqrt2n(2*n,x) & $sqrt2n(n,y) 
   & x=2*y+1":
combine SD2 sqrt2d2:
\end{verbatim}
\begin{figure}[htb]
\begin{center}
\includegraphics[width=6in]{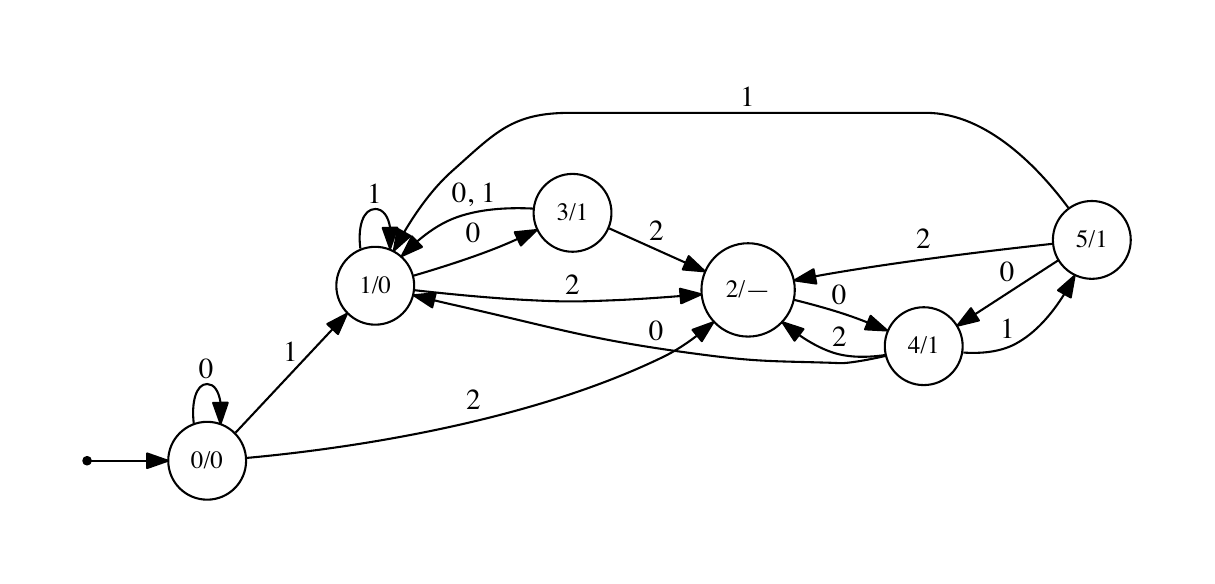}
\end{center}
\caption{Automaton for the $n$'th bit to the right of the binary point
of $\sqrt{2}$.   Input is $2^n$ in Pell representation.}
\label{fig:sqrt2}
\end{figure}
The alert reader will observe that no output is associated with
state $2$.   This is because inputs that lead to this state, such as $12$, are not valid Pell representations.  However, the state cannot be removed, because $120$ {\it is\/} a valid Pell representation.

\subsection{Ostrowski representation}
\label{sec:OstNum}
Of course, what makes our results work is that the numeration systems are ``tuned'' to the particular quadratic irrational we want to compute.   For $\varphi$, this is the Fibonacci numbers; for $\sqrt{2}$, the Pell numbers.   We need to find an appropriate numeration system that is similarly ``tuned'' to any quadratic irrational.   It turns out that the proper system is
the Ostrowski numeration system \cite{Baranwal&Schaeffer&Shallit:2021,Ostrowski:1922}.

Every irrational real number $\alpha$ can be expressed uniquely as an infinite simple continued fraction $\alpha = [d_0, d_1, d_2, \ldots]$. Furthermore, $q_n$ is called the denominator of a convergent for $\alpha$ if $q_{-2}=1$, $q_{-1}=0$, and $q_n=d_nq_{n-1}+q_{n-2}$ for $n\geq 0$. For example, the continued fraction for $\pi$ is $[3, 7, 15, 1, \ldots]$, corresponding to the sequence $(q_i)_{i\geq 0} = 1, 7, 106, 113, \dotsc$.

The Ostrowski $\alpha$-numeration system uses the sequence $(q_n)_{n\geq0}$ of the denominators of the convergents for $\alpha$ to construct a unique representation for a non-negative integer
$N$ expressed as
\[ N = [a_{n-1} a_{n-2} \cdots a_0]_\alpha = \sum_{0\leq i < n}a_iq_i , \]
where
\begin{enumerate}
  \setlength{\itemsep}{1pt}
  \setlength{\parskip}{0pt}
  \setlength{\parsep}{0pt}
\item $0 \leq a_0 < d_1$;
\item $0 \leq a_i \leq d_{i+1}$ for $i \geq 1$; and
\item for $i \geq 1$, if $a_i = d_{i+1}$ then $a_{i-1} = 0$.
\end{enumerate}

The Ostrowski $\alpha$-representation for $N = [a_{n-1} a_{n-2} \cdots a_0]_\alpha$ is then determined with a greedy algorithm, starting at the most significant term and choosing the largest multiple $a_{n-1}$ for $q_{n-1}$ that is less than $N$, and then applying the same algorithm to $N-a_{n-1} q_{n-1}$. For example, for $\alpha = \sqrt{3}+1 = [2, \overline{1,2}]$, the denominators of the continued fraction convergents form the sequence $(q_i) = 1, 1, 3, 4, 11, 15, \dotsc$ (OEIS \seqnum{A002530}). Rule $1$ for the construction forces $a_0 = 0$ because $d_1 = 1$, while rule $2$ requires that $a_1 \leq d_2 = 2$, $a_2 \leq d_3 = 1,$ and so on. Rule $3$ ensures uniqueness by enforcing the constraint that if $a_1 = d_2 = 2$, then $a_2 = 0$, and if $a_2 = d_3 = 1$, then $a_3 = 0$, and so on. Then, for example, the $\alpha$-representation of $37 = 2 \cdot 15 + 4 + 3 = 2q_5 + q_3 + q_2 = [20110]_\alpha$.

In order to construct a DFAO $A_b$ that, given the input of the Ostrowski $\alpha$-representation of $b^n$, computes the $n$'th digit to the right of the point in the base-$b$ representation of $\alpha$, 
we require an Ostrowski $\alpha$-synchronized function $n \rightarrow \lfloor n\alpha \rfloor$. 
It was shown in \cite{Schaeffer&Shallit&Zorcic:2024} that every quadratic irrational  $0 < \beta < 1$ with a purely periodic continued fraction $[0, \overline{ d_1, d_2, \ldots, d_m}]$ has an Ostrowski $\beta$-synchronized sequence $(\lfloor n \beta \rfloor)_{n\geq 1}$, such that
\begin{equation}
[(n-1)_\beta 0^m]_\beta = q_m(n-1) + q_{m-1} \cdot \lfloor n\beta \rfloor,
\label{eq:beatty}
\end{equation}
where $q_i$ is the denominator of the $i$'th convergent to $\beta$, and $(n-1)_\beta0^m$ is the $\beta$-representation of $n-1$, left-shifted $m$ times.  

Furthermore, it was shown that if  $\alpha > 0$ belongs to $ \Que(\beta)$, then $(\lfloor n \alpha \rfloor)_{n\geq 1}$ is synchronized in terms of the Ostrowski $\beta$-representation through the relation $\alpha = (a+b\beta)/c$, where $b$,~$c \geq 1$, and
\begin{equation}
    \lfloor  n\alpha  \rfloor = \left\lfloor \frac{\lfloor bn \beta \rfloor + an}{c} \right\rfloor.
\label{eq:beatty2}
\end{equation}
This is notable because when constructing an Ostrowski $\alpha$-representation with {\tt Walnut}, it is assumed that $0 < \alpha < \frac{1}{2}$, which corresponds to a continued fraction with terms $d_0 = 0$ and $d_1 > 1$. If $\alpha \geq \frac{1}{2}$, then we can set $d_0 = 0$ and rotate the period until $d_1 >1$, giving a quadratic irrational $0 < \beta < \frac{1}{2}$ corresponding to the periodic part of $\alpha$. Then an Ostrowski representation for $\beta$ can be constructed, and Eq.~\eqref{eq:beatty} is used to find an automaton for $\lfloor n \beta \rfloor$, followed by Eq.~\eqref{eq:beatty2} to find an automaton for $\lfloor n \alpha \rfloor$. Therefore, $(\lfloor n \alpha \rfloor)_{n\geq1}$ is synchronized in terms of the Ostrowski $\beta$-representation. 

For example, for $\alpha = \sqrt{3}+1 = [2, \overline{1,2}]$, we have $\alpha \geq \frac{1}{2}$.  Since
we only compute the digits after the decimal point, we set $d_0 = 0$ and then rotate the period to get $\beta = [0, \overline{2,1}] = (\sqrt{3}-1)/2$. This gives the sequence of denominator convergents $1, 2, 3, 8, 11, 30, \dotsc$, where $m=2$, $q_m = 3$, and $q_{m-1} =2$, and so Eq.~\eqref{eq:beatty} gives $[(n-1)_\beta 00]_\beta = 3\lfloor n \beta \rfloor + 2(n-1)$. This results in a DFA for $\lfloor n \beta \rfloor$ that has 23 states. Then, we find $\alpha = (2+2\beta)/1$, with $a=2$, $b=2$, and $c=1$, and Eq.~\eqref{eq:beatty2} gives a DFA with 20 states, shown in Figure \ref{fig_pvan}. 

\begin{figure}
\begin{center}
\includegraphics[width=6in]{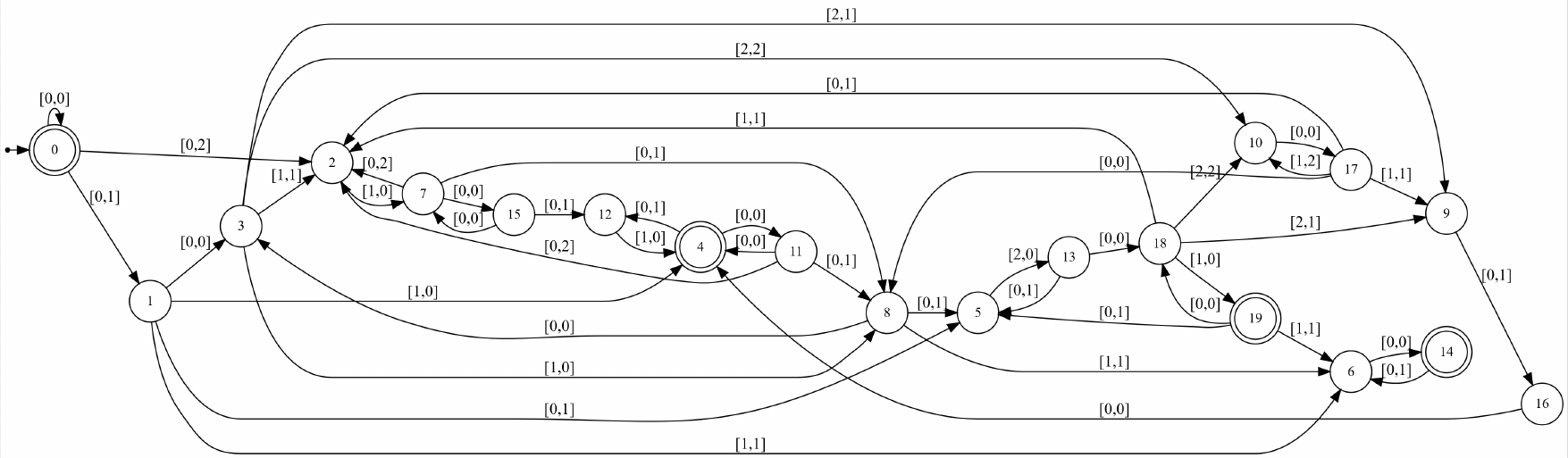}
\end{center}
\caption{Synchronized automaton for $\lfloor n \alpha \rfloor$ for $\alpha=\sqrt{3}+1$.}
\label{fig_pvan}
\end{figure}
Then, for example $(5)_\beta = 110$ and $\lfloor 5\alpha \rfloor = 13$ and $(13)_\beta = 10010$. When we input $[0,1][0,0][1,0][1,1][0,0]$ into the automaton, we visit states $1, 3, 8, 6, 14$ in succession, and so we accept. From here, the same process that is outlined in Section~\ref{phi23}
can be used to construct a DFA accepting in parallel the Ostrowski $\alpha$-representations of $q$ and $\lfloor bq\alpha \rfloor - b\lfloor q \alpha \rfloor$, and ultimately the DFAO $A_b$ as desired.

\subsection{{\tt Walnut} implementation}\label{sec:Walnut}
Constructing the DFAOs for other quadratic irrationals with {\tt Walnut} requires the {\tt ost} command to create custom Ostrowski representations. As explained above, {\tt Walnut} requires that $0 < \beta < \frac{1}{2}$ to create the corresponding Ostrowski representation, and it is possible to create a DFAO for $\alpha \geq \frac{1}{2}$ by synchronizing it in terms of the the Ostrowski representation for $\beta$. Presented below are the general steps for constructing a DFAO for a quadratic irrational $\alpha$ with {\tt Walnut}, using the process explained above with Equations \eqref{eq:beatty} and \eqref{eq:beatty2}.

First, we construct the continued fraction of $\beta < \frac{1}{2}$ from $\alpha$ by setting $d_0=0$ and rotating the period until $d_1 > 1$, if necessary. Next, we  determine the denominators $j = q_m$ and $k = q_{m-1}$ of the continued fraction convergent to $\beta$, where $m$ is the number of elements in the period. Lastly, we find $a$, $b$, and $c$ from the relation $\alpha = (a+b\beta)/c$, where $b,c \geq 1$. With these, we can use the following {\tt Walnut} commands:
\begin{verbatim}
# Construct Ostrowski representation for Beta
    ost ostBeta [0] [d1 d2 ... dm];
# Create a DFA of z = floor(n*Beta) using j and k
    def betan "?msd_ostBeta Eu,v n=u+1 & $shift(u,v) & v=k*z+j*u":
# Create a DFA of z = floor(n*Alpha) synchronized 
    def alphan "?msd_ostBeta Eu $betan(b*n,u) & z=(u+a*n)/c":
# Create a DFAO for Alpha in base 2
    def alphan_d2 "?msd_ostBeta Ex,y $alphan(2*n,x) & $alphan(n,y) 
    & x!=2*y":
    combine AD2 alphan_d2:
\end{verbatim}

The {\tt shift} DFA can be constructed from a regular expression as done above for $\varphi$,
and is based on the specific representation and continued fraction sequence. If multiple left-shifts are required, it may be simpler to create a {\tt shift} DFA that left-shifts only one position at a time, and chain its use together multiple times. For example, three left-shifts could be achieved using a 1-shift DFA by:
\begin{verbatim}
def betan "?msd_ostBeta Eu,v,w,x n=u+1 & $shift(u,v) & $shift(v,w) 
       & $shift(w,x) & x=k*z+j*u":
\end{verbatim}

Using this process, we created the DFAOs for other quadratic irrationals including
the ``bronze ratio'' $(\sqrt{13}+3)/2=[3,\overline{3}]$ and several Pisot numbers.

\subsection{\texttt{Walnut} code for quadratic irrationals}
\label{sec:Walnut-extra}

In this section we give \texttt{Walnut} code and provide the resulting automata for several
more quadratic irrationals, including the ``bronze ratio''
$(\sqrt{13}+3)/2=[3,\overline3]$ in bases 2 and 3.  We also
provide code that construct automata computing the Pisot numbers
$\sqrt3+1$ and $(\sqrt{17}+3)/2$ and some other closely related
quadratic irrationals.

\subsubsection{The bronze ratio $(\sqrt{13}+3)/2$ in base $2$ and base $3$}
\label{sec:bronze}
\begin{verbatim}
ost bt [0] [3];
reg bts {0,1,2,3} {0,1,2,3} 
   "([0,0]|[0,2][2,2]*[2,0]|([0,2][2,2]*[2,3]|[0,3])
   [3,0]|([0,1]|[0,2][2,2]*[2,1])([1,1]|[1,2][2,2]*[2,1])*
   (([1,2][2,2]*[2,3]|[1,3])[3,0]|[1,2][2,2]*[2,0]|[1,0]))*":
def btbn "?msd_bt Eu,v n=u+1 & $bts(u,v) & v=1*z+3*u":
def btan "?msd_bt Eu $btbn(1*n,u) & z=(u+3*n)/1":
\end{verbatim}
DFAO for the bronze ratio in base 2 (see Figure~\ref{figbrb2}):
\begin{verbatim}
def btn_d2 "?msd_bt Ex,y $btan(2*n,x) & $btan(n,y) & x!=2*y":
combine BTND2 btn_d2:
\end{verbatim}
DFAO for the bronze ratio in base 3 (see Figure~\ref{figbrb3}):
\begin{verbatim}
def btn_d3a "?msd_bt Ex,y $btan(3*n,x) & $btan(n,y) & x=3*y+1":
def btn_d3b "?msd_bt Ex,y $btan(3*n,x) & $btan(n,y) & x=3*y+2":
combine BTND3 btn_d3a btn_d3b:    
\end{verbatim}

\begin{figure}
\begin{center}
\includegraphics[width=4in]{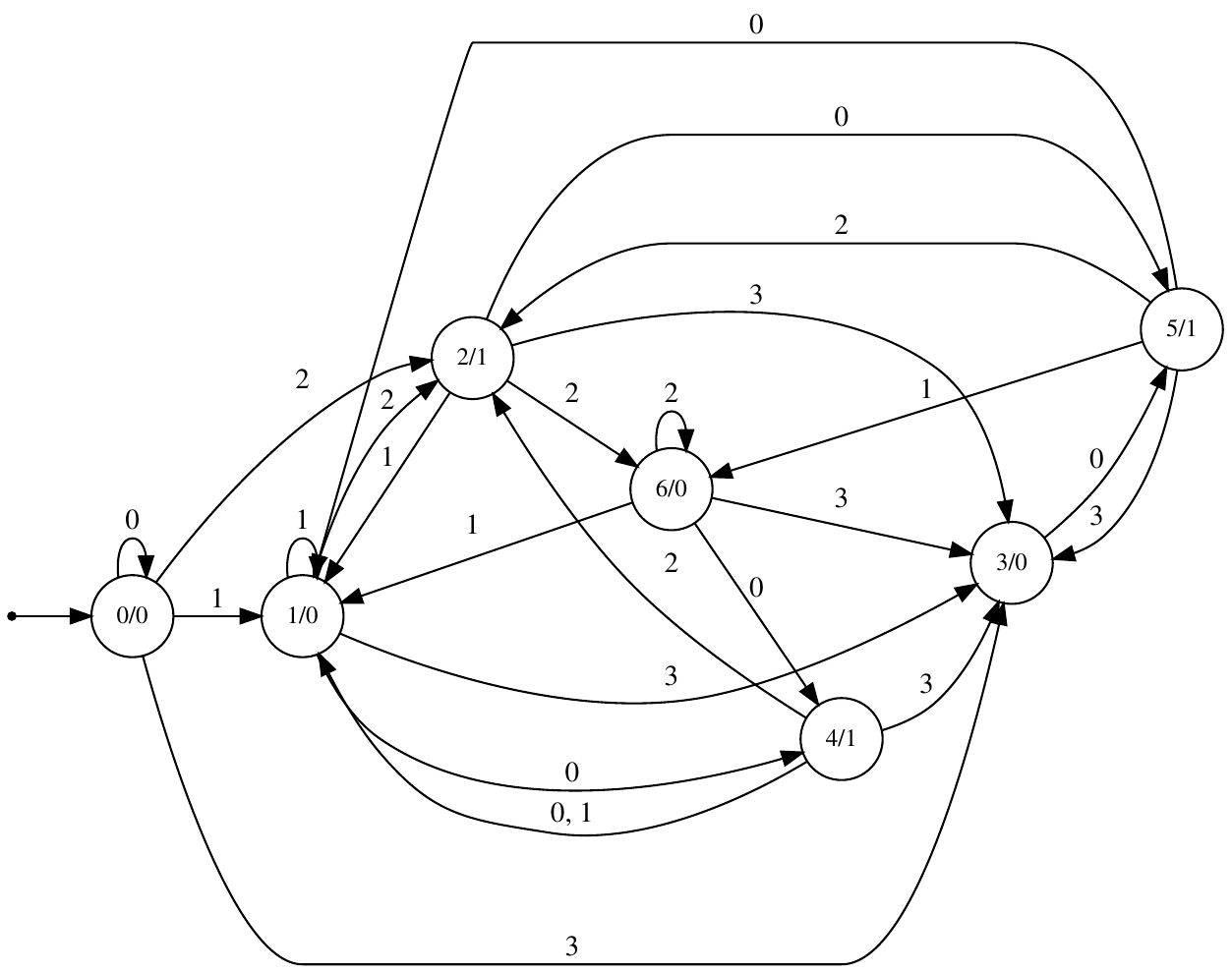}
\end{center}
\caption{Automaton for the $n$'th bit to the right of the point of 
$(\sqrt{13}+3)/2$
in base $2$.}
\label{figbrb2}
\end{figure}

\begin{figure}
\begin{center}
\includegraphics[width=4in]{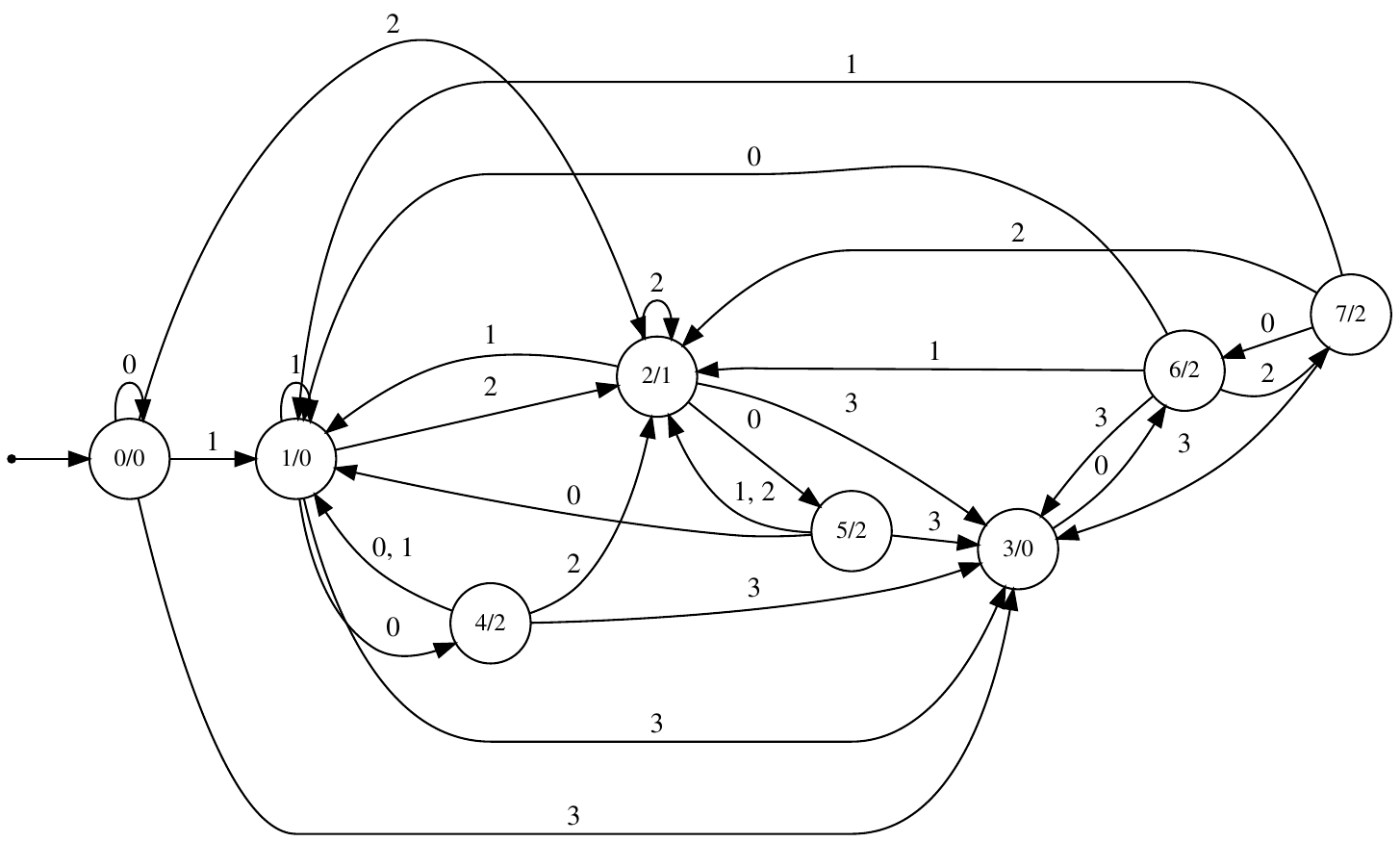}
\end{center}
\caption{Automaton for the $n$'th bit to the right of the point of 
$(\sqrt{13}+3)/2$
in base $3$.}
\label{figbrb3}
\end{figure}

\subsubsection{Pisot number $\sqrt{3}+1$ and $(\sqrt{3}-1)/2$ in base $2$}
\label{sec:Pisot1}
\begin{verbatim}
ost pv1 [0] [2 1];
reg pv1s {0,1,2} {0,1,2} "([0,0]|([0,1][1,1][1,0]|[0,1][1,0])|
     [0,2][2,0])*":
def pv1bn "?msd_pv1 Et,u,v n=t+1 & $pv1s(t,u) & $pv1s(u,v) 
   & v=2*z+3*t":
\end{verbatim}
DFAO for $(\sqrt{3}-1)/2$
in base 2:
\begin{verbatim}
def pv1bn_d2 "?msd_pv1 Ex,y $pv1bn(2*n,x) & $pv1bn(n,y) & x!=2*y":
combine PV1B2 pv1bn_d2:
\end{verbatim}
DFAO for $\sqrt{3}+1$
in base 2:
\begin{verbatim}
def pv1an "?msd_pv1 Eu $pv1bn(2*n,u) & z=(u+2*n)/1":
def pv1n_d2 "?msd_pv1 Ex,y $pv1an(2*n,x) & $pv1an(n,y) & x!=2*y":
combine PV12 pv1n_d2:
\end{verbatim}

\subsubsection{Pisot number
$(\sqrt{17}+3)/2$
and 
$(\sqrt{17}-3)/4$
in base $2$}
\label{sec:Pisot2}


\begin{verbatim}
ost pv2 [0] [3 1 1];
reg pv2s {0,1,2,3} {0,1,2,3} 
   "([0,0]|[0,1][1,0]|[0,1][1,1][1,0]|[0,2][2,0]|
   [0,2][2,1][1,0]|[0,3][3,0])*":
def pv2bn "?msd_pv2 Es,t,u,v n=s+1 & $pv2s(s,t) 
   & $pv2s(t,u) & $pv2s(u,v) & v=4*z+7*s":
\end{verbatim}
DFAO for $(\sqrt{17}-3)/4$
in base 2 (see Figure~\ref{figcf0311}):
\begin{verbatim}
def pv2bn_d2 "?msd_pv2 Ex,y $pv2bn(2*n,x) & $pv2bn(n,y) & x!=2*y":
combine PV2B2 pv2bn_d2:
\end{verbatim}
DFAO for $(\sqrt{17}+3)/2$
in base 2:
\begin{verbatim}
def pv2an "?msd_pv2 Eu $pv2bn(2*n,u) & z=(u+3*n)/1":
def pv2n_d2 "?msd_pv2 Ex,y $pv2an(2*n,x) & $pv2an(n,y) & x!=2*y":
combine PV22 pv2n_d2:
\end{verbatim}

\begin{figure}
\begin{center}
\includegraphics[width=6in]{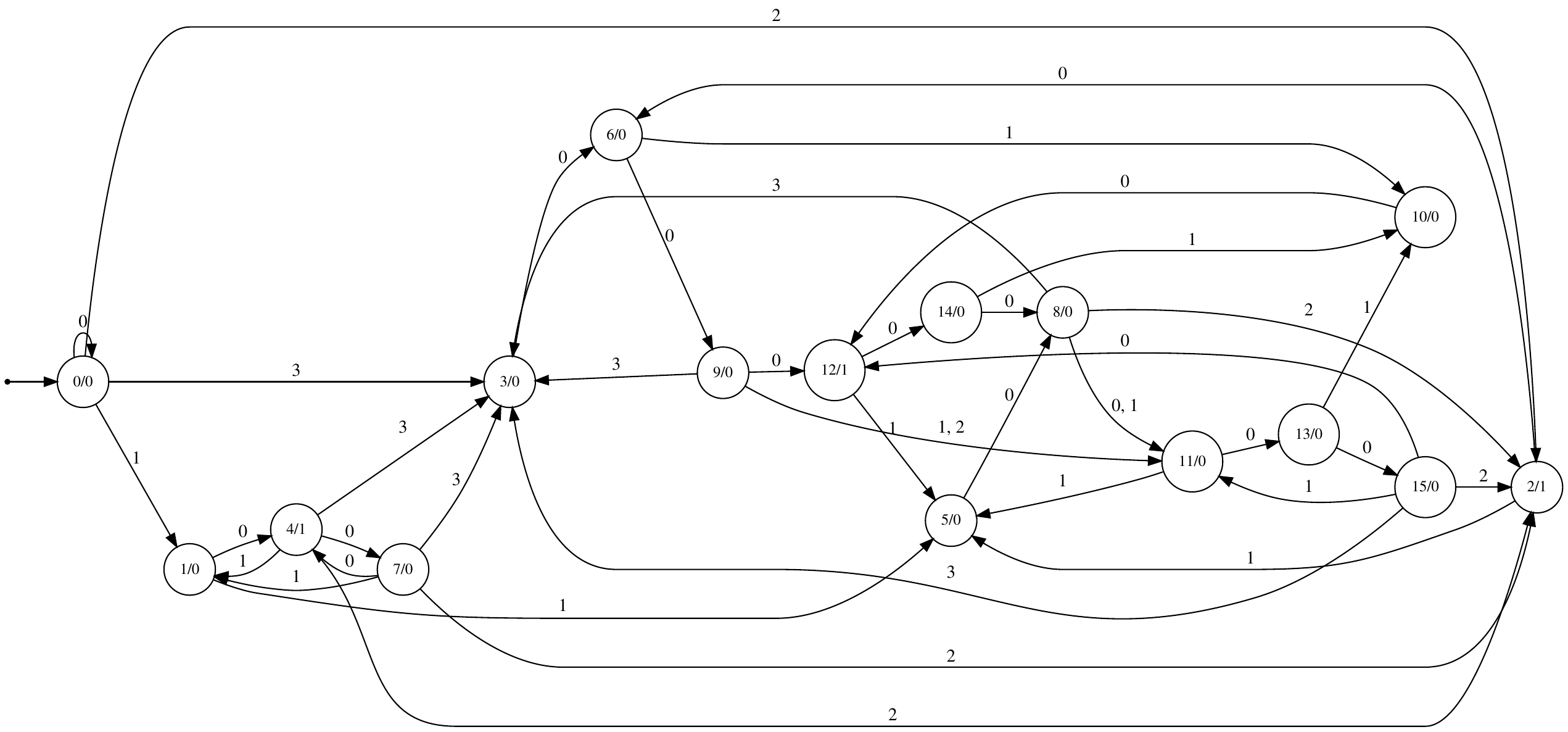}
\end{center}
\caption{Automaton for the $n$'th bit to the right of the point of 
$(\sqrt{17}-3)/4$
in base $2$.}
\label{figcf0311}
\end{figure}


\section{Are the automata minimal?}

The automata that {\tt Walnut} constructs for computing
$ \lfloor bq \varphi \rfloor - b \lfloor q \varphi \rfloor$ on input $q \geq 0$
are guaranteed to be minimal.   However, in this paper, with our application to computing the base-$b$ digits of $\varphi$, we are
only interested in running these automata in the special case when $q = b^i$, the
powers of $b$.   Could it be that there are even smaller automata
that answer correctly on inputs of the form $b^i$ (but might give
a different answer for other inputs)?   After all, for each $t$, we are only concerned with behavior of the automaton on linearly many inputs of length $t$, as opposed to the exponentially large set of valid length-$t$ Zeckendorf representations.   Thus, the automaton is not very constrained.

We do not know the answer to this question, in general.   The question is likely
difficult; in terms of computational complexity,
it is a special case of a problem known to be 
NP-hard, namely, the problem of inferring a minimal DFAO from
incomplete data \cite{Gold:1978}.
However, this problem can sometimes be solved in practice 
using satisfiability (SAT) solving \cite{Zakirzyanov&Shalyto&Ulyantsev:2018}.

We are able to show that some of our automata are indeed minimal, among all automata giving the correct answers on inputs of the
form $q = b^i$, and satisfying two conventions:
first, that  leading zeroes in the input cannot affect the result, and second, that the automata obey the Ostrowski rules for the particular numeration system.
Our method of proving minimality (and in some cases uniqueness) uses SAT solving.

We use a modified version of a MinDFA solver called {\tt DFA-Inductor} \cite{Zakirzyanov&Shalyto&Ulyantsev:2018} to generate SAT encodings for minimal automata, which are then passed to the {\tt CaDiCaL} SAT solver~\cite{cadical} to determine whether they have a satisfying solution.
{\tt DFA-Inductor} uses the \textit{compact encoding} method given by Heule and Verwer~\cite{Heule&Verwer:2010}, which defines eight constraints---four mandatory and four redundant---to translate DFA identification into a graph coloring problem, and then encodes those constraints into a SAT instance.
In short, a set of accepting and rejecting strings from a given dictionary are used to construct an automaton called an \textit{augmented prefix tree acceptor} (APTA), which is then used to construct a \textit{consistency graph} (CG) made up from vertices of the APTA.
Edges in the consistency graph identify the vertices of the APTA that cannot be safely merged together, and by partitioning the vertices as disjoint sets of equivalent states and iterating over the number of partitions, a minimal DFA can be constructed.
Symmetry-breaking predicates are used to enforce a lexicographic breadth-first search (BFS) enumeration on the ordering of states in the constructed DFA, which reduces the size of the search space by removing isomorphic automata from consideration~\cite{Zakirzyanov&Shalyto&Ulyantsev:2018}. 


One of the redundant constraints of the compact encoding method adds all of the determinization conflicts from the consistency graph as binary clauses in the encoding. 
We found it was often the case that the time required to run the determinization step during the generation of the consistency graph was considerably higher than the time required by the solver to find a solution without the extra clauses. Therefore, we excluded the determinization step, so that only direct conflicts between accepting and rejecting states were added as binary clauses.

{\tt DFA-Inductor} only supports DFAs (and hence only accepting or rejecting states), however, and additional output status labels were added for bases larger than 2.
{\tt DFA-Inductor} does not explicitly encode a ``dead state'' rejecting invalid strings, but a transition to a dead state can be implied by a lack of an outgoing transition on a given state.  
Another redundant constraint of the compact encoding method forces each state to have an outgoing transition on every symbol, which is required under the formal definition of a DFA\@.  In order to accommodate the virtual dead state requirement, the constraint must be amended to exclude whichever symbols must transition to the implied dead state.

Our automata follow the convention that the start state consumes leading $0$'s in the input string. In terms of the compact encoding variables, $y_{l,p,q}$ indicates that state $p$ has a transition to state $q$ on label $l$, or in the context of graph coloring, that parents of vertices with color $q$ and incoming label $l$ must have color~$p$.  This constraint is then implemented by enforcing state $0$ to have a self-loop on the symbol $0$ using the unit clause $y_{0,0,0}$, and the dictionary given to {\tt DFA-Inductor} states that the string $0$ produces output $0$.

In order for the SAT solver to construct automata that obey the rules of a given Ostrowski representation, we encode the Ostrowski rules as a set of constraints.
Without these constraints, the solver may find a smaller DFAO by allowing rule-breaking transitions---%
such as allowing consecutive 1's for $\varphi$ in the Zeckendorf representation. 
In Section~\ref{sec:ostEncoding1}, we provide a SAT encoding of the Ostrowski rules for quadratic irrationals with a purely periodic continued fraction with a period of 1, such as $\varphi=[1,\overline1]$, $(\sqrt8+2)/2=[2,\overline2]$, and $(\sqrt{13}+3)/2=[3,\overline3]$.
In Section~\ref{sec:ostEncoding2}, we provide an encoding for purely periodic quadratic irrationals with longer periods, like
$(\sqrt{3}-1)/2=[0,\overline{2,1}]$ and $(\sqrt{17}-3)/4=[0,\overline{3,1,1}]$.

\subsection{A simple Ostrowski encoding for metallic means}
\label{sec:ostEncoding1}

A \emph{metallic mean} is a quadratic irrational $\alpha = [d_1, \overline{d_1}]$ with only a single repeating term $d_1$ in its period. When considering how to encode the three rules for the Ostrowski $\alpha$-representation of $N = [a_{n-1}a_{n-2}\cdots a_0]_\alpha$, we can think of each transition in the DFA as choosing a value for $a_i$, starting with $a_{n-1}$ and working down to $a_0$. Rule 1 requires that $a_0 < d_1$, which is enforced by requiring all strings in the dictionary to be valid strings in the representation, rather than through constraints on the solver. In the context of the metallic means, rule 2 requires that $a_i \leq d_1$ for all $i \geq 1$, which equates to restricting the set of valid transition labels for each state to be in the range of $0$ to $d_1$. This is done automatically during the construction of the APTA, because $d_1$ is the largest value in the dictionary and thus the highest possible transition label.  For $i\geq1$,  rule 3 requires that if $a_i = d_1$, then $a_{i-1} = 0$; this is the only constraint that must be explicitly encoded. Since every state has the same set of allowed transitions, we only need to enforce that no state have a self-loop on label $d_1$, and that if state $q_a$ transitions to $q_b$ on label $d_1$, then $q_b$ must transition to the dead state on labels $1$ to $d_1$. Since dead states are implicit, the constraint is implemented by removing all outgoing transitions from $q_b$ except $0$. The constraints are 
\begin{displaymath}
\bigwedge_{i \in Q} \lnot y_{d_1,i,i} \quad \text{ and } \quad
\bigwedge_{\substack{i, j, k \in Q \\i \neq j }} \bigwedge_{1\leq l\leq d_1}(y_{d_1,i,j} \rightarrow \lnot y_{l,j,k})
\end{displaymath}
where $Q$ is the set of states in the resulting DFAO.

\subsection{Ostrowski encoding for purely periodic quadratic irrationals}
\label{sec:ostEncoding2}

Quadratic irrationals of the form $\alpha = [0, \overline{d_1, d_2, \ldots, d_t} ]$, where $d_1 > 1$, require a more robust encoding than the metallic means in order to account for multiple terms in the period. The order of terms in the continued fraction determines the set of valid transitions between any two states. Therefore, the SAT solver must understand how to relate a given state in the DFAO to a given term in the continued fraction, otherwise the resulting DFAO may fail to reject invalid strings. We now present one such encoding.

Each Ostrowski $\alpha$-representation is a language made up from the set of valid strings that can be constructed using the Ostrowski rules. This language is recognized by a canonical DFA, and serves as the base that informs the valid structure of the final DFAO\@. Constructing a DFAO using only the states in the base DFA guarantees that rule 2 and rule 3 of the Ostrowski construction are never violated. Furthermore, a minimal complete DFAO constructed from only the base states is guaranteed to be minimal for the language. If it were possible to construct an even smaller complete DFAO that is still correct, then a state would exist that is not in the base DFA, implying that at least two unique base states could be safely combined, and thus that the canonical base DFA is not minimal. Therefore, the base states describe the complete set of rules the SAT solver must understand in order to construct a valid DFAO\@. Conveniently, {\tt Walnut} automatically generates a DFA of the Ostrowski base during the process of constructing the representation. 

Since each state in the base DFA has a unique transition set, we can refer to the $i$'th state in the base DFA as the $i$'th \textit{base state}. For example, Figure~\ref{figWal21} shows for $\alpha = (\sqrt{3}-1)/2 = [0, \overline{2,1}]$ how each base state in the Ostrowski base DFA (bottom), labelled {\tt B0} to {\tt B5}, correspond exactly to a state in the DFAO for returning the $i$'th digit of $\alpha$ in base 2 (top). 

\begin{figure}[H]
\begin{center}
\includegraphics[width=6in]{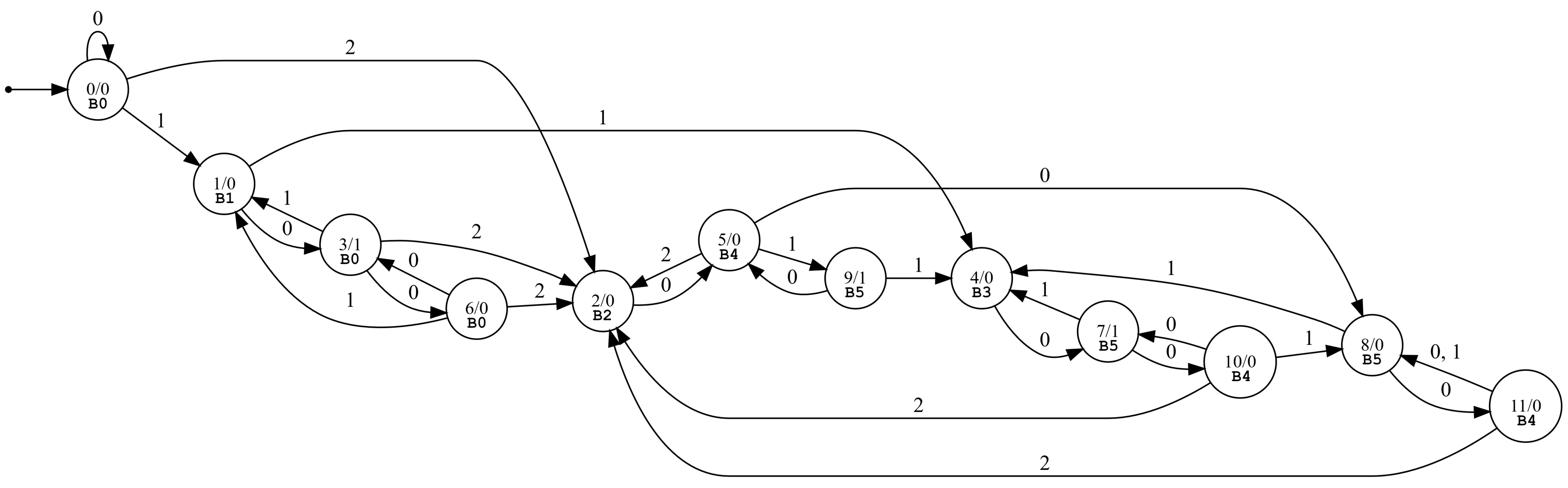}
\includegraphics[width=3in]{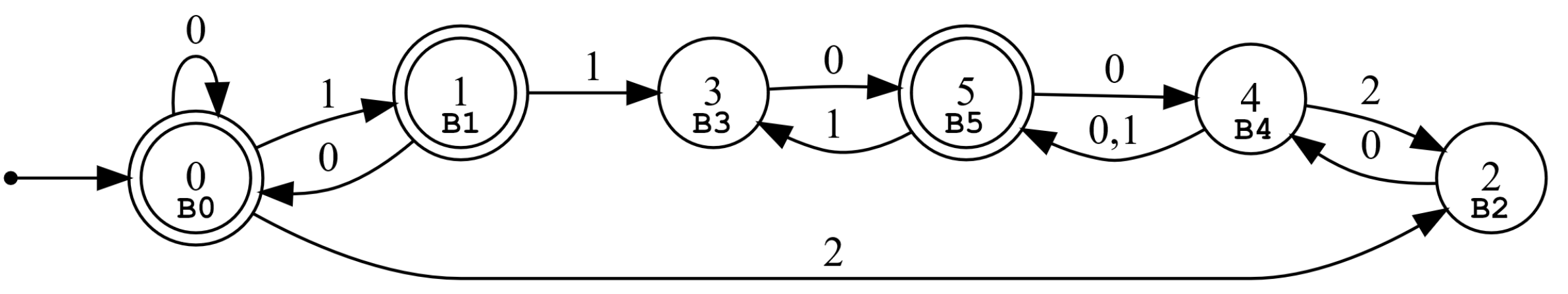}
\end{center}
\caption{Relationship between the base states in the DFA for the Ostrowski representation of $\alpha = (\sqrt{3}-1)/2$ (bottom), labelled {\tt B0} to {\tt B5}, and the states in the DFAO for returning the $n$'th bit to the right of the point of $\alpha$ in base 2 (top).}
\label{figWal21}
\end{figure}

Each cyclic permutation of terms in the continued fraction requires a unique SAT encoding for the states of the Ostrowski base DFA. Term orderings that preserve the permutation differ only in which base states are accepting or rejecting. Since the dictionary file  contains only accepted strings, the SAT solver does not need to know from this encoding which states of the Ostrowski base DFA should be rejecting, and so only the transition set of each base type is encoded. 

The Ostrowski rules are encoded through the states in the Ostrowski base DFA by constraining each state in the DFAO to match a certain base state. Therefore, to encode the base states, we create a new variable $b_{p,t}$, which says state $p$ in the DFAO is related to base state $t$ in the Ostrowski base DFA. We then relate the $b$ variable to the transition variable $y_{l,p,q}$, which constrains the set of valid transitions between $p$ and $q$ according to which base states they are associated with. The encoding is presented in Table~\ref{tableEnc}. 

The last constraint in the table is the only one that needs to be manually determined for each Ostrowski base DFA, depending on the permutation of term orderings in the continued fraction. For example, for $\alpha = (\sqrt{3}-1)/2$ in Figure~\ref{figWal21}, base state {\tt B4} is encoded as follows,
where $Q$ denotes the set of states in the DFAO and $B$ denotes the set of states in the Ostrowski base DFA:
\begin{gather*}
\label{eq:ExWal21_a} \bigwedge_{\substack{i, j \in Q \\i \neq j }}\bigl((b_{i,4} \wedge b_{j,2} \rightarrow \lnot y_{0,i,j}) \wedge (b_{i,4} \wedge b_{j,2} \rightarrow \lnot y_{1,i,j}) \wedge (b_{i,4} \wedge b_{j,5} \rightarrow \lnot y_{2,i,j})\bigr)
\\
\label{eq:ExWal21_b} \bigwedge_{\substack{i, j \in Q \\i \neq j }}\ \bigwedge_{\substack{k \in B \setminus \{2,5\}}}\ \bigwedge_{0\leq l\leq 2}(b_{i,4} \wedge b_{j,k} \rightarrow \lnot y_{l,i,j})
\end{gather*}
\setlength{\tabcolsep}{3pt}
\begin{table}
\centering
\begin{tabular}{m{4cm}>{\raggedright}m{4cm}>{\raggedright\arraybackslash}m{7cm}} 
\thead{\bf Constraints}
& \thead{\bf Range}  
& \thead{\bf Meaning}
\\
\hline
$\lnot y_{k,0,0}$
& ${1 \leq k \leq c}$  
& The start state can only have a self-loop on 0.
\\
\hline
$\lnot y_{k,i,i}$
& ${i \in Q}; i \neq 0; {1 \leq k \leq c}$  
& No states other than the start state can have a self-loop on any label.
\\
\hline
$b_{0,0}$
& $ $
& The start state is related to base state 0.
\\
\hline
${b_{i,s} \rightarrow \lnot b_{i,t}}$    
& {$i \in Q$}; {$s, t \in B$}
& Each state in the DFAO must be related to at most one base type.
\\
\hline
$b_{i,1} \lor b_{i,2} \lor \dotsb \lor b_{i,|B|}$
& ${i \in Q}$
& Each state in the DFAO must be related to at least one base type.     
\\
\hline
$ (b_{i,s} \land b_{j,t}) \rightarrow \lnot y_{k,i,j}$ 
& $i, j \in Q$; $s,t \in B$; $k \in \Sigma$; $\delta(s,k) \neq t$  
& Suppose DFAO state $i$ is related to base state $s$, and state $j$ is related to base state $t$.
If state $s$ in the base DFA does not have a transition to state $t$ on label $k$, then $i$ cannot have a transition to $j$ on label $k$ in the DFAO.
\\
\hline
\end{tabular}%
\caption{SAT encoding of Ostrowski constraints for purely periodic quadratic irrationals.
In this table $\delta$ denotes the transition function, $\Sigma$ denotes the alphabet, and $c = \max(\Sigma)$.}
\label{tableEnc}
\end{table}

\subsection{Results}\label{sec:Results}

Table~\ref{tableWalnut} gives our results of DFA minimization by SAT on a few quadratic irrationals.
In each of the cases, the {\tt Walnut} solution was confirmed to be minimal
by proving that there are no satisfying assignments of the SAT encoding with a smaller number of states
than in the {\tt Walnut}-produced automaton.

The dictionary containing the Ostrowski representation of the first $i$ digits is referred to as the $i$'th digit set.
The solver is run on the SAT encoding of each digit set for a given number of states.
The state count was increased every time the solver returned UNSAT,
and the digit set was increased 
every time a satisfying assignment was found.
Once the state count given by the {\tt Walnut}-produced solution was reached,
the solver was run exhaustively to find all
satisfying assignments of the SAT formula and therefore
\emph{all} candidates for the minimal automata computing the quadratic irrational.
However, most satisfying assignments encoded automata that only computed the
given digit set correctly and did \emph{not} correctly compute the digits
of the quadratic irrational at high precision.

\begin{table}[t]
\centering
\begin{tabular}{ | c | c | c | c | c | c | c | c | }   
\hline 
\thead{Quadratic \\ Irrational}
& \thead{$\varphi$ \\ base 2 \\ 8 states} 
& \thead{$\varphi$ \\ base 3 \\ 13 states}
& \thead{$\sqrt2$ \\ base 2  \\ 6 states }
& \thead{$\frac{\sqrt{13}+3}{2}$ \\ base 2  \\ 7 states }
& \thead{$\frac{\sqrt{13}+3}{2}$ \\ base 3  \\ 8 states }
& \thead{$\frac{\sqrt{3}-1}{2}$ \\ base 2  \\ 12 states }
& \thead{$\frac{\sqrt{17}-3}{4}$ \\ base 2 \\ 16 states}\\
\hline
Digit set size     & 54          & 197       & 29        & 64        & 64        & 27        & 57       \\
SAT time (sec)    & 0.50    & 28,425.5   & 0.08  & 142.81   & 44.68   & 0.14 &  68.11   \\
UNSAT time (sec)  & 0.18        & 12,123.0  & 0.02      & \phantom{00}0.52      & 24.76     & 0.08   &    \phantom{0}2.59      \\
Number of candidates        & 1           & 3         & 1         & 3         & 7         & 1         & 9         \\
\hline
\end{tabular} 
\caption{Results for computing minimal automata for various quadratic irrationals.}
\label{tableWalnut}
\end{table}

The digit set size given in Table~\ref{tableWalnut} is the smallest dictionary required for the SAT solver to find the $n$-state {\tt Walnut} solution.
The SAT time is the time required by the solver to find the {\tt Walnut} automaton.
The UNSAT time is the time required determine no automata is possible using $n-1$ states.
Since no candidate solutions are found at $n-1$ states,
we conclude that the $n$-state {\tt Walnut} solution is minimal.

In some cases, multiple distinct candidates were found
that correctly computed at least 10,000 digits of the quadratic irrational (see the last row of Table~\ref{tableWalnut}).
For reference, the {\tt Walnut} solutions for the cases with multiple candidate solutions are given in Section~\ref{sec:Walnut-extra}.
For all except $(\sqrt{17}-3)/4$, these candidate solutions differ from the {\tt Walnut} solution only by their outgoing transitions on the start state.
The candidates for $\varphi$ (base 3) and $(\sqrt{13}+3)/2$ (base 2) have differing transitions on label 1,
while the candidates for $(\sqrt{13}+3)/2$ (base 3) differ on label 2.
All of the candidates for $(\sqrt{17}-3)/4$ have the same start state, but differ in their transitions on label 2.
Given how similar the 
candidate solutions are to the {\tt Walnut} solution and that they are correct up to a high precision,
it is possible that the {\tt Walnut} solution is not unique, though we leave this as an open problem.

Minimization of DFAOs for this purpose presents a particular challenge for the SAT solver, as both the size of the digit set required to find 
a candidate solution
and the length of the representation for each digit position can be arbitrarily large. 
For this reason, $\varphi$ in base~4 and $\sqrt{2}$ in base 3 encountered prohibitively long solving times before the required number of states (22 states and 14 states, respectively) could be reached, preventing the minimality of the {\tt Walnut} solutions from being determined.
For $\varphi$ in base 4, it took over 25 hours for the 78'th digit set to be declared UNSAT at 13 states, and for $\sqrt{2}$ in base 3, it
took over 55 hours for the 258'th digit set to be declared SAT at 11 states, but the satisfying assignment
found by the solver corresponded to an automata that incorrectly computed the ternary digits of $\sqrt2$ starting at the 321'th digit.


\appendix

\section{Appendix}

In this Appendix we give an automaton that computes the
digits of $\varphi$ in base 10.
This DFAO has transition function $\delta(q,i)$ and output
function $\tau(q)$, with initial state $0$ and state set
$\{0,1,\ldots, 96 \}$.   The transition function and output are
given in Table~\ref{tab6}.

This was generated with the following {\tt Walnut} code:

{\small
\begin{verbatim}
reg shift {0,1} {0,1} "([0,0]|[0,1][1,1]*[1,0])*":
def phin "?msd_fib (s=0 & n=0) | Ex $shift(n-1,x) & s=x+1":
def fibdigit "?msd_fib Ex,y $phin(10*n,x) & $phin(n,y) & z+10*y=x":

def fibd1 "?msd_fib $fibdigit(n,1)":
def fibd2 "?msd_fib $fibdigit(n,2)":
def fibd3 "?msd_fib $fibdigit(n,3)":
def fibd4 "?msd_fib $fibdigit(n,4)":
def fibd5 "?msd_fib $fibdigit(n,5)":
def fibd6 "?msd_fib $fibdigit(n,6)":
def fibd7 "?msd_fib $fibdigit(n,7)":
def fibd8 "?msd_fib $fibdigit(n,8)":
def fibd9 "?msd_fib $fibdigit(n,9)":

combine FD10 fibd1 fibd2 fibd3 fibd4 fibd5 fibd6 fibd7 fibd8 fibd9:

# FD10[10^n] gives the n'th digit in the decimal
# representation of phi = 1.61803...
\end{verbatim}
}

\begin{table}
   \centering
\resizebox{.4\columnwidth}{!}{%
    \begin{tabular}{c|c|c|c||c|c|c|c}
    $q$ & $\delta(q,0)$ & $\delta(q,1)$ & $\tau(q) $ & $q$ & $\delta(q,0)$ & $\delta(q,1)$ & $\tau(q) $ \\
    \hline
  0& 0& 1& 0&49&46&27& 9\\
  1& 2&---& 6&50&71&---& 5\\
  2& 3& 4& 2&51&72&30& 1\\
  3& 5& 6& 8&52&73&---& 7\\
  4& 7&---& 4&53&32&74& 3\\
  5& 8& 9& 0&54&75& 1& 9\\
  6&10&---& 7&55&35&---& 6\\
  7&11&12& 3&56&76& 6& 8\\
  8&13&14& 9&57&77&---& 7\\
  9&15&---& 5&58&78&12& 3\\
 10&16&17& 1&59&79&47& 9\\
 11&18&19& 7&60&80&17& 1\\
 12&20&---& 4&61&51&19& 7\\
 13&21&22& 0&62&81&55& 0\\
 14&23&---& 6&63&24& 4& 2\\
 15&24&25& 2&64&26&82& 8\\
 16&26&27& 8&65&83&---& 4\\
 17&28&---& 5&66&29& 9& 1\\
 18&29&30& 1&67&84&---& 7\\
 19&31&---& 7&68&61&85& 3\\
 20&32&33& 3&69&86&---& 5\\
 21&34&14& 9&70&87&37& 2\\
 22&35&---& 5&71&88&45& 2\\
 23&36&37& 2&72&46&89& 9\\
 24&38&39& 8&73&72&50& 1\\
 25&40& 0& 4&74&90&---& 3\\
 26&41&42& 0&75&54& 1& 0\\
 27&43&---& 6&76&91& 9& 0\\
 28&44&45& 3&77&16&92& 1\\
 29&46&47& 9&78&18&19& 8\\
 30&48&---& 5&79&41&22& 0\\
 31&49&50& 1&80&46&27& 8\\
 32&51&52& 7&81&34& 1& 9\\
 33&53&---& 3&82&77&---& 6\\
 34&54&55& 0&83&44&12& 3\\
 35&56& 4& 2&84&80&50& 1\\
 36& 5&57& 8&85&53&---& 4\\
 37&58&---& 4&86&93& 4& 2\\
 38&59& 9& 1&87&94&82& 8\\
 39&60&---& 7&88&38&67& 8\\
 40&61&12& 3&89&95&---& 6\\
 41&62&14& 9&90&96&74& 3\\
 42&63&---& 5&91&13&47& 9\\
 43&64&65& 1&92&83&---& 5\\
 44&66&67& 8&93&76&39& 8\\
 45&68&---& 4&94& 8&42& 0\\
 46&41&69& 0&95&87&65& 2\\
 47&70&---& 6&96&73&52& 7\\
 48&24&25& 2\\
    \end{tabular}
    }
    \caption{Transitions and outputs for a DFAO that on input $(10^n)_F$ returns
the $n$'th digit to the right of the decimal point in the
decimal expansion of $\varphi$.}
    \label{tab6}
\end{table}


\end{document}